\def\R{\mathbb{R}}
\def\C{\mathbb{C}}
\def\col{\mathrm{col}\,}
\numberwithin{equation}{section}
\theoremstyle{plain}
\newtheorem{theorem}{Theorem}[section]
\newtheorem{remark}{Remark}[section]
\newtheorem*{corollary*}{Corollary}
\def\R{\mathbb{R}}
\def\col{\mathrm{col}\,}
\date{}
\begin{document}
	

\title{Well-posed Cauchy problem\\ and the Hamiltonian form of (2+1) nonlinear equations\\  integrable  by inverse scattering transform}
	
	\author{L.~P.~Nizhnik\footnote{Institute of Mathematics NAS of Ukraine, Kyiv, Ukraine, nizhnik@imath.kiev.ua }}
	
	\maketitle

	\begin{abstract}
	The Hamiltonian form of the (2+1) nonlinear integrable Schrödinger equation and the system of two (2+1) nonlinear analogue of the mKdV equation is proved.
	A well--posed Cauchy problem is formulated and the solvability of such a problem for the (2+1) nonlinear analogue of the mKdV equation is proved.
	\end{abstract}
MSC 2020: 37K40, 35Q55, 35R30

\section{The nonlinear Schr\"odinger equation}

The paper \cite{1} presents  well-posedness and solvability of the Cauchy problem for the (2+1) nonlinear Schrödinger equation of the form:
 \begin{equation}\label{1.1}
 	i\frac{\partial u}{\partial t}+\frac{\partial^2 u}{\partial x^2}+\frac{\partial^2 u}{\partial y^2}+(v_1 +v_2)u=0,
 \end{equation}
where pseudopotentials $v_1$ and $v_2$ are real-valued functions related to the solution $u(x,y;t)$ of equation (\ref{1.1}) in the form:
 \begin{equation}\label{1.2}
 	\frac{\partial v_1}{\partial x}=2\frac{\partial}{\partial y}|u|^2,\qquad \frac{\partial v_2}{\partial y}=2\frac{ \partial}{\partial x}|u|^2.
 \end{equation}
 For the Cauchy problem, equality (\ref{1.2}) can be represented as following:
  \begin{equation}\label{1.3}
  \begin{array}{l}
\displaystyle  v_1(x,y;t)=p_-(y,t)+2 \int\limits_{-\infty}^{x}\,\frac{\partial}{\partial y}|u(s,y;t |^2\,ds,\\[3mm]
\displaystyle v_2(x,y;t)=q_+(x,t)-2\int\limits_{y}^{+\infty}\,\frac{\partial}{\partial x }|u(x,s;t|^2\,ds,
 \end{array}
 \end{equation}
 where
$$ p_{-}(y,t)=v_1(- \infty,y;t),\qquad q_{+}(x,t)=v_2(x,+\infty;t).$$
The Cauchy problem for equations (\ref{1.1})--(\ref{1.3}) consists of finding a solution $u(x,y;t)$ of equation (\ref{1.1})--(\ref{1.3}) with a given initial data $u(x,y;0)$
and specifying two functions $ p_{-}(y,t)$ and $q_{+}(x,t)$. A solution to such a Cauchy problem is given in \cite{1}.

In order to construct the Hamiltonian form for the equation (\ref{1.1})--(\ref{1.2}),
we proceed in the same way as in the case of the (1+1) Schrödinger equations. We will denote by $u_1=u$, and by $u_2=\bar{u}$ in equalities (\ref{1.1})--(\ref{1.2}).
The relationship (\ref{1.2}) of the pseudopotentials $v_1$ and $v_2$ with $u_1$ and $u_2$ can be represented in the form:
\begin{equation}\label{1.5}
v_1(x,y;t)=r(y,t)+2\partial_y\partial_x^{-1}(u_1u_2),\quad	v_2(x,y;t)=s(x,t)+2\partial_x\partial_y^{-1}(u_1u_2),
\end{equation}
where functions $r$ and $s$ are real-valued and uniformly bounded in their arguments:
\begin{equation}\label{1.6}
\begin{array}{l}
\displaystyle r(y,t)=\frac{1}{2}[v_1(+\infty,y;t)+v_1(- \infty,y;t)],\\[3mm]
\displaystyle s(x,t)=\frac{1}{2}[v_2(x,+\infty;t)+v_2(x,-\infty;t)].
\end{array}
\end{equation}
In formulae (\ref{1.5}), the operators $\partial_x^{-1}$ and $\partial_y^{-1}$ are skew-symmetric, having the following form:
\begin{equation}\label{1.8}
\begin{array}{l}
\displaystyle \partial_x^{-1}	f(x)=\frac{1}{2}\Bigl[\int\limits_{-\infty}^x\,f(s)\,ds-\int\limits_{x}^{+\infty} \,f(s)\,ds\Bigr],\\ [3mm]
\displaystyle \partial_y^{-1}g(y)=\frac{1}{2}\Bigl[\int\limits_{-\infty}^y\,g(s)\,ds-\int\limits_{y}^{+\infty} \,g(s)\,ds\Bigr],\quad f,g \in L_1(\R^1).
\end{array}	
\end{equation}
Therefore, the identity holds:
\begin{equation}\label{1.9}
\iint\limits_{\R^2}\,\psi(x,y)[\partial_x \partial_y^{-1}]\varphi(x,y)\,dx dy=\iint\limits_{\R^2}\,\varphi(x,y)[\partial_x \partial_y^{-1}]\psi(x,y)\,dx dy.
\end{equation}
A similar identity is also valid for the operator $\partial_y \partial_x^{-1}$. The functions $\psi$ and $\varphi$ are assumed to be rapidly decreasing at infinity  with their derivatives $\psi,\,\varphi \in W_2'(\R^2).$
\begin{theorem}
Equation (\ref{1.1}) with explicit relation (\ref{1.5}) of values $v_1$ and $v_2$ with solution $u_1,$\,$u_2$ admits a representation of the form:
\begin{equation}\label{1.10}
\frac{\partial u_1}{\partial t}=\frac{-i\delta H}{\delta u_2},\qquad \frac{\partial u_2}{\partial t}=\frac{i\delta H}{\delta u_1},
\end{equation}
where the Hamilton function $H$ has the form:
\begin{equation}\label{1.11}
	H=\iint\limits_{\R^2}\,[ u_{1,x}u_{2,x}+u_{1,y}u_{2,y}-u_1(r+s)u_2-u_1u_2[\partial_x \partial_y^{-1}+\partial_y \partial_x^{-1}] (u_1u_2)]\,dx\, dy.
\end{equation}
\end{theorem}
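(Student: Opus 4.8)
The plan is to verify the representation (\ref{1.10})--(\ref{1.11}) by computing the two variational derivatives of $H$ directly and checking that they reproduce the system obtained from (\ref{1.1}) on setting $u_1=u$, $u_2=\bar u$ and substituting $v_1,v_2$ from (\ref{1.5}). Throughout one treats $u_1$ and $u_2$ as independent functional variables and $r=r(y,t)$, $s=s(x,t)$ as fixed real-valued coefficients, so that the Hamiltonian flow is considered with $r,s$ frozen; their time dependence merely makes $H$ non-autonomous, which causes no difficulty. All integrations by parts will lose their boundary terms because of the assumed rapid decay $u_1,u_2\in W_2'(\R^2)$, and $\partial_x^{-1}(u_1u_2)$, $\partial_y^{-1}(u_1u_2)$ are well defined on this class since $u_1u_2$ is integrable in each of its variables.

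First I would compute $\delta H/\delta u_2$ term by term in (\ref{1.11}). The Dirichlet-type terms $\iint(u_{1,x}u_{2,x}+u_{1,y}u_{2,y})\,dx\,dy$ contribute $-u_{1,xx}-u_{1,yy}$ after one integration by parts in $x$ and in $y$ respectively; the term $-\iint u_1(r+s)u_2\,dx\,dy$ contributes $-(r+s)u_1$. For the nonlocal quartic term put $w:=u_1u_2$ and $K:=\partial_x\partial_y^{-1}+\partial_y\partial_x^{-1}$. By identity (\ref{1.9}) and its analogue for $\partial_y\partial_x^{-1}$ the operator $K$ is symmetric, so the first variation of $-\iint w\,(Kw)\,dx\,dy$ equals $-\iint(\delta w)\,(Kw)\,dx\,dy-\iint w\,(K\delta w)\,dx\,dy=-2\iint(\delta w)\,(Kw)\,dx\,dy$, and with $\delta w=u_1\,\delta u_2$ this yields the contribution $-2u_1K(u_1u_2)$. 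Collecting the four pieces,
$$\frac{\delta H}{\delta u_2}=-u_{1,xx}-u_{1,yy}-(r+s)u_1-2u_1K(u_1u_2)=-u_{1,xx}-u_{1,yy}-(v_1+v_2)u_1,$$
the last equality being precisely the defining relation (\ref{1.5}). Hence $\partial_t u_1=-i\,\delta H/\delta u_2=i\,[u_{1,xx}+u_{1,yy}+(v_1+v_2)u_1]$, which is equation (\ref{1.1}) for $u=u_1$.

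Since $H$ is symmetric under the interchange $u_1\leftrightarrow u_2$ (the coefficient $r+s$ is real and the quartic kernel $K$ is symmetric), the same computation with the roles of the two variables exchanged gives $\delta H/\delta u_1=-u_{2,xx}-u_{2,yy}-(v_1+v_2)u_2$, so that $\partial_t u_2=i\,\delta H/\delta u_1=-i\,[u_{2,xx}+u_{2,yy}+(v_1+v_2)u_2]$; as $v_1+v_2$ is real, this is the complex conjugate of (\ref{1.1}), which shows that the system (\ref{1.10}) is consistent with the reduction $u_2=\bar u_1$ and equivalent to (\ref{1.1})--(\ref{1.2}). The only delicate point -- and the step I expect to be the main obstacle -- is the treatment of the nonlocal quartic term: one must carefully justify that $\partial_x\partial_y^{-1}$ and $\partial_y\partial_x^{-1}$ are symmetric on the admissible function class, so that the two ways in which $u_1u_2$ enters (as an explicit factor and inside the operator) combine into the single term $2u_1K(u_1u_2)$, and that the antiderivative operators in (\ref{1.8}) produce no boundary contributions under the stated decay; this is exactly what identity (\ref{1.9}) and the hypothesis $u_1,u_2\in W_2'(\R^2)$ are there to guarantee.
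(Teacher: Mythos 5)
Your proposal is correct and follows essentially the same route as the paper: compute $\delta H/\delta u_2$ term by term, use the symmetry identity (\ref{1.9}) to get the factor $2u_1[\partial_x\partial_y^{-1}+\partial_y\partial_x^{-1}](u_1u_2)$ from the quartic term, and then recognize $-(r+s)u_1-2u_1K(u_1u_2)=-(v_1+v_2)u_1$ via (\ref{1.5}). The only difference is that you spell out the variation of the nonlocal term and the conjugate equation in more detail than the paper does.
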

\begin{proof}
Using the property (\ref{1.9}), we obtain:
$$\frac{\delta H}{\delta u_2}=[-\Delta u_1-(r+s)u_1 -2u_1[\partial_x \partial_y^{-1}+\partial_y \partial_x^{-1}]  (u_1u_2)].
$$
Considering the explicit relation (\ref{1.5}) of $v_1$ and $v_2$ with the solution $u_1,$\,$u_2$, we have
$$\frac{\delta H}{\delta u_2}= -[\Delta u_1+(v_1+v_2)u_1].
$$
Therefore, the equation $\displaystyle \frac{\partial u_1}{\partial t}=\frac{-i\delta H}{\delta u_2}$ is equivalent to (\ref{1.1}) because of $u_1=u$, and $u_2=\bar{u}$.
Similarly, it is proved that the equation $\displaystyle \frac{\partial u_2}{\partial t}=\frac{i\delta H}{\delta u_1}$ is equivalent to the equation (\ref{1.1}) for $u_2=\bar{u}$.
\end{proof}
\section{Spatially two-dimensional nonlinear integrable analogues of the KdV equation}
Back in 1980, in the work \cite{2}, a spatially symmetric two--dimensional  KdV equation was considered, which in the literature became known as the Nizhnik-Novikov-Veselov equation \cite{4}.
This equation has the form:
\begin{equation}\label{2.1}
\begin{array}{l}
 \displaystyle   \frac{\partial u}{\partial t}=k_1\frac{\partial^3 u}{\partial x^3}+ k_2\frac{\partial^3 u}{\partial y^3}+3\frac{\partial}{\partial x}(v_1u)+3\frac{\partial}{\partial y}(v_2u),\\[5mm]
 \displaystyle \frac{\partial v_1}{\partial y}=k_1\frac{\partial u}{\partial x},\qquad  \frac{\partial v_2}{\partial x}=k_2\frac{\partial u}{\partial y}.
\end{array}
\end{equation}
It has a Lax representation:
\begin{equation}\label{2.2}
  LP - QL=0
\end{equation}
for the operator:
\begin{equation}\label{2.3}
  L=\frac{\partial^2}{\partial x \partial y}+u(x,y;t).
\end{equation}
Let us now consider the spatially two--dimensional analogue of the modified KdV equation \cite{3}, when the nonlinearity is cubic, in contrast to equation (\ref{2.1}), where it is quadratic. The equation has the form:
\begin{equation}\label{2.4}
  \begin{array}{l}
  \displaystyle   \frac{\partial u}{\partial t}=\frac{\partial^3 u}{\partial x^3}+\frac{\partial^3 u}{\partial y^3}+(vu)_y+(wu)_x-\frac{1}{2}(v_y+w_x)u, \\[3mm]
    v_x=3(u^2)_y,\qquad w_y=3(u^2)_x.
  \end{array}
\end{equation}
and the Lax representation has the form (\ref{2.2}), where
\begin{equation}\label{2.5}
\begin{array}{l}
  L= \begin{pmatrix} \displaystyle \frac{\partial}{\partial x}& u\\-u &  \displaystyle \frac{\partial}{\partial y}\end{pmatrix},\,\, P= \begin{pmatrix} \displaystyle \mathfrak{D}-\frac{1}{2}v_y& -3u_x\partial_x\\ \displaystyle 3u_y\partial_y &  \displaystyle \mathfrak{D}- \frac{1}{2}w_x\end{pmatrix},\\[10mm]
  Q= \begin{pmatrix}\displaystyle \mathfrak{D}-\frac{1}{2}v_y-w_x & 3\partial_y u_y\\ \displaystyle -3\partial_x u_x &  \displaystyle \mathfrak{D}-\frac{1}{2}w_x-v_y\end{pmatrix},
  \end{array}
\end{equation}
where
$$ \mathfrak{D}=\partial_t-\partial^3_x-\partial^3_y-v\partial_y-w\partial_x.
$$
\begin{theorem}\label{th:2.1}
Let $u(x,y;t)\in L_2(\R^2)$ be a solution of equation (\ref{2.4}), and let the functions $v$ and $w$ of equation (\ref{2.4}) be related to the solution $u$ of equation (\ref{2.4}) as following:
\begin{equation}\label{2.6d}
\begin{array}{l}
\displaystyle v(x,y;t)=p(y,t)+3\int\limits_{-\infty}^{x}\,[u^2(s,y,t)]_y'\,ds,\\ [3mm]
\displaystyle w(x,y;t)=q(x,t)- 3\int\limits_{y}^{+\infty}\,[u^2(x,s,t)]_x'\,ds,
\end{array}
\end{equation}
where
$$ p(y,t)=v(-\infty,y,t),\qquad q(x,t)=w(x,+\infty,t).
$$
Then, the scattering data $f(x,y,t)$ satisfy the equation:
\begin{equation}\label{2.10}
  \frac{\partial f}{\partial t}-\frac{\partial^3 f}{\partial x^3}-\frac{\partial^3 f}{\partial y^3}-q(x,y)\frac{\partial f}{\partial x}-\frac{1}{2}q_x'(x,t)f-p(y,t)\frac{\partial f}{\partial y}-\frac{1}{2}p_y'(y,t)f=0.
\end{equation}
\end{theorem}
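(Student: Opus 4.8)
The plan is to exploit the Lax representation (2.2)--(2.5) together with the explicit normalisation (2.6d). First I would record what the Lax pair buys us: if $u$ solves (2.4) then $LP-QL=0$ means $L$ annihilates $P\psi$ whenever it annihilates $\psi$, so the overdetermined linear system $L\psi=0$, $P\psi=0$ (equivalently $Q\psi=0$) is compatible. Recall that the scattering data $f(x,y,t)$ is built from the Jost solutions of the matrix problem $L\psi=0$: solving $\partial_x\psi_1=-u\psi_2$, $\partial_y\psi_2=u\psi_1$ by Volterra integral equations one obtains the solution $\psi^{-}$ whose datum is prescribed at $x=-\infty$ (a function $a(y)$, with $\psi_2\to0$ there) and the solution $\psi^{+}$ whose datum is prescribed at $y=+\infty$ (a function $b(x)$, with $\psi_1\to0$ there); the scattering operator $S(t)$ carries the $\psi^{-}$-datum into the $\psi^{+}$-datum, and $f(\cdot,\cdot,t)$ is its integral kernel, $b=S(t)a$. (The hypothesis $u\in L_2(\R^2)$ is understood to carry whatever additional decay/smoothness the direct scattering construction requires for these objects to exist and to depend differentiably on $t$.)

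Next I would compute the time evolution of the two normalising data. Since the Jost solutions also satisfy $P\psi=0$ — here one must check that the $P$-flow is compatible with the chosen $t$-independent normalisation, so that no extra multiplier is generated — one passes to the spatial infinities. As $x\to-\infty$ one has $u\to0$, $u_x\to0$, while (2.6d) gives $v\to p(y,t)$ and $v_y\to p_y(y,t)$; moreover $\psi_1$ becomes $x$-independent, so $\partial_x^3\psi_1$, $w\,\partial_x\psi_1$ and the off-diagonal entry $-3u_x\,\partial_x\psi_2$ drop out of the first row of $P\psi=0$, leaving $\partial_t\psi_1=\bigl(\partial_y^3+p\,\partial_y+\tfrac12 p_y\bigr)\psi_1$, i.e. $\partial_t a=B_-a$ with $B_-:=\partial_y^3+p\,\partial_y+\tfrac12 p_y$. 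Symmetrically, as $y\to+\infty$ one has $u\to0$, $u_y\to0$, $w\to q(x,t)$, $w_x\to q_x(x,t)$, and $\psi_2$ becomes $y$-independent, so the second row of $P\psi=0$ collapses to $\partial_t\psi_2=\bigl(\partial_x^3+q\,\partial_x+\tfrac12 q_x\bigr)\psi_2$, i.e. $\partial_t b=B_+b$ with $B_+:=\partial_x^3+q\,\partial_x+\tfrac12 q_x$.

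It then remains to transfer this to $S(t)$ and its kernel. Differentiating $b=S(t)a$ in $t$ and using $\partial_t a=B_-a$, $\partial_t b=B_+b$ gives the operator identity $\dot S=B_+S-SB_-$; written with the kernel and integrated by parts in $y$ to move $B_-$ onto $f$ (its formal $y$-adjoint is $B_-^{*}=-\partial_y^3-p\,\partial_y-\tfrac12 p_y$) this reads
\[
\partial_t f-\partial_x^3 f-\partial_y^3 f-q\,\partial_x f-\tfrac12 q_x f-p\,\partial_y f-\tfrac12 p_y f=0,
\]
which is exactly (2.10). I expect the main obstacle to be analytic rather than algebraic: proving existence and $t$-differentiability of the Jost solutions and of $S(t)$ for the admissible class of $u$, verifying that the $P$-flow preserves the normalisation so that no correction term enters $\dot S$, and justifying that the limits $x\to-\infty$, $y\to+\infty$ may be taken termwise — that $\psi_1$ and $\psi_2$ become $x$- and $y$-independent fast enough along the characteristics for $\partial_x^3\psi_1$, $w\,\partial_x\psi_1$, $\partial_y^3\psi_2$, $v\,\partial_y\psi_2$ and the off-diagonal entries of $P$ to vanish in the limit. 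The algebra, by contrast, is forced: the precise form (2.6d) is exactly what yields $v\to p$, $v_y\to p_y$ at $x=-\infty$ and $w\to q$, $w_x\to q_x$ at $y=+\infty$, so the zeroth-order terms $\tfrac12 q_x f$ and $\tfrac12 p_y f$ emerge with the coefficients in (2.10). The operator $Q$ in (2.5), degenerating at the opposite infinities, furnishes an independent consistency check.
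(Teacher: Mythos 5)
Your argument is correct and takes essentially the same route as the paper: both exploit the Lax pair by evaluating the rows of $P$ in the limits $x\to-\infty$ and $y\to+\infty$, where the normalisation (\ref{2.6d}) reduces them to $\partial_y^3+p\,\partial_y+\tfrac12 p_y'$ and $\partial_x^3+q\,\partial_x+\tfrac12 q_x'$, and then transfer this through the scattering operator (taking the formal $y$-adjoint on the kernel) to obtain (\ref{2.10}). The only cosmetic difference is one of bookkeeping: the paper keeps the incident wave $a_1$ fixed and computes the scattering data of the new solution $\hat{\psi}=P\psi$ of $L\hat{\psi}=0$, using $\hat{b}_2=F\hat{a}_1$, whereas you evolve the Jost normalisation via $P\psi=0$ and write $\dot F=B_+F-FB_-$ --- two equivalent packagings of the same computation.
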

\begin{proof}
Recall \cite{3} that in the case $u_1,\,u_2 \in L_2(\R^2)$ the scattering problem for the equation
\begin{equation}\label{eq:sys}
 \begin{pmatrix}\frac{\partial}{\partial x}& u_1\\u_2 & \frac{\partial}{\partial y}\end{pmatrix} \begin{pmatrix}\psi_1\\ \psi_2 \end{pmatrix}=0
\end{equation}
consists in constructing solutions $\psi_1$,\,$\psi_2$ in the form:
\begin{equation}\label{2.7}
\begin{array}{l}
  \psi_1(x,y)=a_1(y)+o(1),\,\,\textrm{where}\,\, x\rightarrow-\infty;\,\,\, \psi_1(x,y)=b_1(y)+o(1),\,\,\textrm{where}\,\,x\rightarrow+\infty; \\[3mm]
   \psi_2(x,y)=a_2(x)+o(1),\,\,\textrm{where}\,\, y\rightarrow-\infty;\,\,\,  \psi_2(x,y)=b_2(x)+o(1),\,\,\textrm{where}\,\, y\rightarrow+\infty.
\end{array}
\end{equation}
The functions $a_1$,\,$a_2 \in L_2(\R^1)$ are incident waves, and the functions $b_1$,\,$b_2 \in L_2(\R^1)$ are scattered waves. In this case, if
$a=\col(a_1,a_2)$ are given, then $b=\col(b_1,b_2)$ are uniquely determined by $b=Sa$, where $S=I+F$ is the scattering operator, and
$$F= \begin{pmatrix}F_{11}& F_{12}\\F_{21} & F_{22}\end{pmatrix}
$$ is the matrix of Hilbert--Schmidt integral operators $F_{ij}.$ There exists an operator
$S^{-1}=I+G$.
The scattering data are a pair of operators $(F_{21},\,G_{12})$ or $(F_{12},\,G_{21})$. The scattering data are uniquely determined by the potentials $u_1,\,u_2$. The potentials are uniquely reconstructed from the scattering data. The operator that transforms the potentials into scattering data is denoted by $\mathfrak{A}$, and the solutions to the inverse scattering problem are determined by the operator
 $\mathfrak{A}^{-1}$.
 In the case $u_2=-\bar{u}_1$ for the scattering data, only one operator $F_{21}$ is sufficient to solve the inverse scattering problem. In this case, the necessary and sufficient condition for the Hilbert-- Schmidt operator $F_{21}$ to be scattering data is the condition $||F_{21}||< 1$.
 In what follows, we will omit the indices (2,1) of the operator $F_{21}$, and denote the kernel of this operator by $f(x,y)$. Then, if $a_2=0$, then $b_2=Fa_1$.
 Let $u(x,y;t)$ be a solution to system (\ref{2.4}). Let $\psi=\col(\psi_1,\psi_2)$ be a solution to equation (\ref{eq:sys}).
 Then the function $\hat{\psi}=P\psi$ by virtue of the Lax representation (\ref{2.2}) also satisfies the equation $L\hat{\psi}=0$. Let the incident waves for the solution $\psi$ have the form $(a_1, 0)$. Then, the incident waves $(\hat{a}_1,\hat{a}_2)$ and the scattered one $\hat{b}_2$ for the solution $\hat{\psi}$ have the form:
\begin{equation}\label{2.8}
\begin{array}{l}
\displaystyle  \hat{a}_1(y,t)=[-\frac{\partial^3}{\partial y^3}-\frac{1}{2}p_y'(y,t)-p(y,t)\frac{\partial}{\partial y}]a_1(y),\quad \hat{a}_2=0, \\[5mm]
 \displaystyle \hat{b}_2(x,t)=[\frac{\partial}{\partial t}-\partial_x^3-q(x,t)\frac{\partial}{\partial x}-\frac{1}{2}q_x'(x,t)]b_2(x,t).
\end{array}
\end{equation}
Since $b_2=Fa_1$ and $\hat{b}_2=F\hat{a}_2$, then  equation (\ref{2.10})  follows from (\ref{2.8}) for the scattering data $f(x,y;t)$
of the kernel of the integral operator $F(t)$.
\end{proof}
\begin{theorem}\label{th:2.2}
Let the initial conditions of equation (\ref{2.10}) be $f(x,y,0) \in L_2(\R^2)$, and let the real functions $p(y,t)$ and $q(x,t)$ belong to the space $\C^1(\R^2)$. Then the solution $f(x,y,t)$ of the Cauchy problem for equation (\ref{2.10}) exists and is unique for any time interval, and the integral operator $F(t)$ with kernel $f(x,y,t)$ has the property
\begin{equation}\label{2.11}
  ||F(t)||_{L_2}=||F(0)||_{L_2}.
\end{equation}
\end{theorem}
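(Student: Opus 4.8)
The plan is to read (\ref{2.10}) as an abstract linear evolution equation
\[
\frac{\partial f}{\partial t}=\bigl(A(t)+B(t)\bigr)f,\qquad
A(t)=\partial_x^3+\tfrac12\bigl(q\,\partial_x+\partial_x\, q\bigr),\qquad
B(t)=\partial_y^3+\tfrac12\bigl(p\,\partial_y+\partial_y\, p\bigr),
\]
where I have rewritten the first-order terms of (\ref{2.10}) in the symmetrized form $q\partial_x f+\tfrac12 q_x' f=\tfrac12(q\partial_x+\partial_x q)f$, and likewise with $p$. Note that $A(t)$ acts only on $x$ and $B(t)$ only on $y$, so the two operators commute.

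First I would check that each $A(t)$ is skew-symmetric in $L_2(\R^1_x)$ and each $B(t)$ is skew-symmetric in $L_2(\R^1_y)$. For Schwartz data this amounts to three integrations by parts for $\partial_x^3$ (all boundary terms vanish) together with the elementary identity $\int(q g'+\tfrac12 q_x' g)\bar h\,dx=-\int g\,\overline{(q h'+\tfrac12 q_x' h)}\,dx$, which holds because $q$ is real; hence $i\,A(t)$, $i\,B(t)$ and $i\,(A(t)+B(t))$ are symmetric. Since $i(\partial_x^3+\partial_y^3)$ is essentially self-adjoint on $C_c^\infty(\R^2)$ with closure on $H^3(\R^2)$, and the symmetrized transport term is skew-symmetric and relatively bounded with relative bound zero (here one uses that $p,q$ and $p_y',q_x'$ are bounded, uniformly on compact $t$-intervals), the Kato--Rellich theorem shows $i(A(t)+B(t))$ is self-adjoint on the common domain $H^3(\R^2)$ for every $t$.

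Next I would construct the propagator. The family $\{\,i(A(t)+B(t))\,\}_t$ consists of self-adjoint operators on the fixed domain $H^3(\R^2)$, each generating a unitary group, hence stable with constant $1$; and $t\mapsto A(t)+B(t)$ is continuous (indeed $\C^1$, using $p,q\in\C^1$) as a map into $\mathcal L\bigl(H^3(\R^2),L_2(\R^2)\bigr)$. The standard theory of linear evolution equations with time-dependent skew-adjoint generators then yields a unique unitary propagator $U(t,s)$ on $L_2(\R^2)$, and $f(x,y,t)=U(t,0)f(x,y,0)$ is the unique solution of the Cauchy problem for (\ref{2.10}) on any time interval. (Equivalently, the commutation of $A$ and $B$ lets one factor $U(t,s)=U_A(t,s)U_B(t,s)$ and reduce to two one-dimensional non-autonomous linearized KdV equations, each handled by the same self-adjointness argument or by perturbing the explicit Airy-kernel group.) Finally, for smooth data $\frac{d}{dt}\|f(\cdot,\cdot,t)\|_{L_2}^2=2\,\mathrm{Re}\,\langle (A+B)f,f\rangle=0$ by skew-symmetry, so $\|f(t)\|_{L_2}=\|f(0)\|_{L_2}$; approximating arbitrary $f(0)\in L_2(\R^2)$ by Schwartz functions and passing to the limit (or simply invoking unitarity of $U(t,0)$) extends this to all data, and since the Hilbert--Schmidt norm of $F(t)$ equals $\|f(\cdot,\cdot,t)\|_{L_2(\R^2)}$, this is exactly (\ref{2.11}).

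The conservation law (\ref{2.11}) is the soft part: it follows from skew-symmetry the moment a solution is known to live in the energy class. The real obstacle is the well-posedness itself — verifying that $H^3(\R^2)$ is genuinely a common domain for all $t$, that $i(A(t)+B(t))$ is self-adjoint there, and that $t\mapsto A(t)+B(t)$ has the regularity demanded by the evolution-equation theorem. When $p,q$ are only assumed $\C^1$ with no growth control this is delicate; I would either strengthen the hypothesis to $p,q,p_y',q_x'$ bounded uniformly on compact time intervals, or run a mollification/limiting argument, using the a priori bound (\ref{2.11}) for the regularized approximants to pass to the limit.
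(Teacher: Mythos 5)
Your proposal follows essentially the same route as the paper: the equation is split into the two commuting skew-symmetric operators $A_1(t)=\partial_x^3+q\partial_x+\tfrac12 q_x'$ (acting in $x$) and $A_2(t)=\partial_y^3+p\partial_y+\tfrac12 p_y'$ (acting in $y$), a unitary propagator is built from each, and norm conservation (\ref{2.11}) follows from unitarity. You merely supply more of the functional-analytic scaffolding (Kato--Rellich, common domain $H^3$, stability of the time-dependent generators) that the paper leaves implicit, and you correctly flag that boundedness of $p,q,p_y',q_x'$ is the hypothesis actually needed to make that scaffolding work.
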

\begin{proof}
The evolution equation (\ref{2.10}) can be represented as
$$\frac{\partial f}{\partial t}=[A_1(t)+A_2(t)]f,
$$
where  the operators
$\displaystyle A_1=\frac{\partial^3}{\partial x^3}+q\frac{\partial}{\partial x}+\frac{1}{2}q_x'$ and $\displaystyle A_2=\frac{\partial^3}{\partial y^3}+p\frac{\partial}{\partial y}+\frac{1}{2}p_y'$ are skew-symmetric
$A_j^*=-A_j$.  The operator $A_1(t)$ acts on the variable $x$, and the operator $A_2(t)$ acts on the variable $y$, and these operators commute and smoothly depend on $t$. Therefore, there exist unitary operators $U_1(t)$ and $U_2(t)$, as solutions of the equations $\displaystyle \frac{d U_j(t)}{dt}=A_j(t)U_j(t);$\,\,$U_j(0)=I.$

The solution of the Cauchy problem for equation (\ref{2.10}) can be represented in the form:
$$f(x,y,t)=U_1(t)U_2(t)f(x,y,0).
$$
This shows that the equality (\ref{2.11}) is true.
\end{proof}
\begin{theorem}\label{th:2.3}
There is a unique classical solution to the Cauchy problem for the system (\ref{2.4})
with given sufficiently smooth initial condition $u(x,y,0) \in L_2(\R^2)$ and given sufficiently smooth functions $p(y,t),$\,
 $q(x,t)\in \C^1(\R^2)$.
\end{theorem}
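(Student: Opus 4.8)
The plan is to run the inverse scattering transform as a solution scheme, exactly along the lines already set up in Theorems~\ref{th:2.1} and~\ref{th:2.2}. Given the initial datum $u(x,y,0)\in L_2(\R^2)$, I would first apply the direct scattering transform $\mathfrak{A}$ to produce the initial scattering data: the Hilbert--Schmidt operator $F(0)=\mathfrak{A}(u(\cdot,\cdot,0))$ with kernel $f(x,y,0)$, which under the reduction $u_1=u$, $u_2=-\bar u$ is the complete scattering datum and satisfies the admissibility bound $\|F(0)\|_{L_2}<1$ recalled in the discussion preceding Theorem~\ref{th:2.1}. At this step one also records that sufficient smoothness and decay of $u(x,y,0)$ translate into corresponding regularity and decay of the kernel $f(x,y,0)$; this is a mapping property of $\mathfrak{A}$ on suitable weighted/Sobolev-type classes.

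Next I would evolve the scattering data by the linear equation~(\ref{2.10}). By Theorem~\ref{th:2.2}, since $p,q\in\C^1(\R^2)$, the Cauchy problem for~(\ref{2.10}) with data $f(x,y,0)$ has a unique global solution $f(x,y,t)=U_1(t)U_2(t)f(x,y,0)$, and the norm is conserved: $\|F(t)\|_{L_2}=\|F(0)\|_{L_2}<1$ for all $t$. Hence $F(t)$ remains admissible scattering data at every time, so the inverse scattering transform $\mathfrak{A}^{-1}$ applies and one \emph{defines} $u(x,y,t):=\mathfrak{A}^{-1}(F(t))\in L_2(\R^2)$. Because $U_1$ and $U_2$ are the unitary groups generated by the skew-symmetric operators $A_1,A_2$ acting in separate variables and depending smoothly on $t$, the kernel $f(x,y,t)$ is smooth in $t$ and inherits the regularity and decay of $f(x,y,0)$ uniformly on finite time intervals; propagating this through $\mathfrak{A}^{-1}$ shows that $u(x,y,t)$ is as smooth as required to qualify as a classical solution.

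It then remains to verify that this reconstructed $u$ actually solves~(\ref{2.4}); this is the converse direction of Theorem~\ref{th:2.1}. I would build the Lax operator $L$ of~(\ref{2.5}) from the reconstructed $u(x,y,t)$ together with $v,w$ defined from that $u$ by~(\ref{2.6d}), and check that the scattering operator evolving by the flow~(\ref{2.8}) forces the compatibility relation~(\ref{2.2}), $LP-QL=0$, which is equivalent to~(\ref{2.4}). Concretely, $\partial_t\psi=P\psi$ propagates an eigenfunction of $L\psi=0$ consistently precisely when the scattering operator obeys~(\ref{2.8}); since $F(t)$ does obey that flow by construction, the potentials recovered from it must satisfy $L_t=QL-LP$, i.e.~(\ref{2.4}). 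Uniqueness is then immediate: any classical $L_2$-solution of~(\ref{2.4}) with the given initial data and the same $p,q$ has scattering data satisfying~(\ref{2.10}) by Theorem~\ref{th:2.1}; that Cauchy problem has a unique solution by Theorem~\ref{th:2.2}; and since $\mathfrak{A}$ is injective (the potential is uniquely recovered from the scattering data), $u$ is uniquely determined.

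The main obstacle I expect is the regularity bookkeeping rather than any new structural idea: one must establish precise mapping properties of $\mathfrak{A}$ and $\mathfrak{A}^{-1}$ on weighted Sobolev classes so that ``sufficiently smooth'' initial data yields a genuinely classical (pointwise differentiable, suitably decaying) solution for every $t$, and one must justify differentiating under the integral signs in~(\ref{2.6d}) and in the Marchenko-type reconstruction. These rest on decay estimates that are uniform on finite time intervals because $U_1(t)U_2(t)$ is unitary while $p,q$ are only assumed $\C^1$; making the phrase ``sufficiently smooth'' quantitative — how many derivatives and how much decay of $u(x,y,0)$ are actually needed — is the technical heart of the argument.
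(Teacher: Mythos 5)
Your proposal follows exactly the paper's argument: apply $\mathfrak{A}$ to the initial datum, evolve the scattering data by the linear equation (\ref{2.10}), use the norm conservation of Theorem~\ref{th:2.2} to keep $\|F(t)\|_{L_2}<1$ so that $\mathfrak{A}^{-1}$ remains applicable, and define $u(x,y,t)=\mathfrak{A}^{-1}f(x,y,t)$. In fact you supply more detail than the paper does (the converse verification via the Lax pair, the uniqueness argument from injectivity of $\mathfrak{A}$, and the regularity bookkeeping), all of which the paper leaves implicit or defers to the IST machinery of reference \cite{3}.
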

\begin{proof}
Since the solution of the Cauchy problem is reduced by the IST method to the linear Cauchy problem for equation (\ref{2.10}), the initial data $f(x,y,0)=\mathfrak{A}u(x,y,0)$ obtained by the IST method.
The inequality $||F(0)||_{L_2}<1$ is valid, and therefore, we have $||F(t)||_{L_2}<1$ by virtue of theorem \ref{th:2.2}. Hence, the function
$u(x,y,t)=\mathfrak{A}^{-1}f(x,y,t)$ is a solution of the Cauchy problem  by virtue of theorem \ref{th:2.2}.
\end{proof}
\begin{remark}
Using the IST method, one can remove the requirement for additional smoothness of the initial data of the Cauchy problem  in Theorem \ref{th:2.3} if one defines a generalized solution to the Cauchy problem, so that in the case where the formula $u(x,y,t)=\mathfrak{A}^{-1}f(x,y,t)$ has meaning, the function $u(x,y,t)$ is a generalized solution to the Cauchy problem.
\end{remark}

\section{Hamiltonian form of a system of two (2+1) nonlinear analogues of mKdV}
Let us now consider a system of two equations with respect to the functions $u_1(x,y,t)$ and $u_2(x,y,t)$
\begin{equation}\label{3.1}
\begin{array}{l}
\displaystyle  \frac{\partial u_1}{\partial t}=\Bigl(\frac{\partial^3}{\partial x^3}+\frac{\partial^3}{\partial y^3}\Bigr)u_1-v \frac{\partial u_1}{\partial y}-w \frac{\partial u_1}{\partial x}-(v_1+w_1)u_1, \\[7mm]
\displaystyle   \frac{\partial u_2}{\partial t}=\Bigl(\frac{\partial^3}{\partial x^3}+\frac{\partial^3}{\partial y^3}\Bigr)u_2-v \frac{\partial u_2}{\partial y}-w \frac{\partial u_2}{\partial x}-(v_2+w_2)u_2.
\end{array}
\end{equation}

Pseudopotentials $v_1,\,v_2,\,v$,\,$w_1,\,w_2,\,w$ are related to the solutions $u_1$,\,$u_2$ by equalities:
\begin{equation}\label{3.2}
\begin{array}{l}
\displaystyle   \frac{\partial v_1}{\partial x}=3(u_{1,y}'\cdot u_2)_y',\quad  \frac{\partial w_1}{\partial y}=3(u_{1,x}'\cdot u_2)_x',\\[5mm]
 \displaystyle   \frac{\partial v_2}{\partial x}=3(u_{1}\cdot u_{2,y}')_y',\quad  \frac{\partial w_2}{\partial y}=3(u_{1}\cdot u_{2,x}')_x',\\[5mm]
 \displaystyle   \frac{\partial v}{\partial x}=3(u_{1}\cdot u_{2})_y',\qquad  \frac{\partial w}{\partial y}=3(u_{1}\cdot u_{2})_x'.
\end{array}
\end{equation}

System (\ref{3.1})--(\ref{3.2}) admits the Lax representation (\ref{2.2}).
In this case, the operator $$L= \begin{pmatrix}\frac{\partial}{\partial x}& u_1\\ u_2 & \frac{\partial}{\partial y}\end{pmatrix},$$ and the operators $P$ and $Q$ are matrix operators of the form:
\begin{equation}\label{3.3}
  P= \begin{pmatrix} \mathfrak{D}+v_1& -3u_{1,x}\partial_x\\ -3u_{2,y}\partial_y & \mathfrak{D}+w_2\end{pmatrix},\quad
  Q= \begin{pmatrix} \mathfrak{D}+w_1+w_2+v_1 & 3 u_{1,y}\partial_y+3u_{1,yy}\\ 3 u_{2,x}\partial_x +3u_{2,xx}& \mathfrak{D}+w_2+v_1+v_2\end{pmatrix},
  \end{equation}
where
$$ \mathfrak{D}=\partial_t-\partial^3_{xxx}-\partial^3_{yyy}+w\partial_x+v\partial_y.
$$
\begin{theorem}
Let the pseudopotentials of (\ref{3.2}) be uniquely related to solutions $u_1$ and $u_2$ by equalities:
\begin{equation}\label{3.4}
\begin{array}{l}
 \displaystyle  v_1=3\partial^{-1}_{x}(u_{1,y}\cdot u_2)_y', \quad w_1=3\partial^{-1}_{y}(u_{1,x}\cdot u_2)_x', \\[3mm]
  \displaystyle  v_2=3\partial^{-1}_{x}(u_{1}\cdot u_{2,y})_y', \quad w_2=3\partial^{-1}_{y}(u_{1}\cdot u_{2,x})_x', \\[3mm]
  \displaystyle  v=3\partial^{-1}_{x}(u_{1}\cdot u_{2})_y', \qquad w=3\partial^{-1}_{y}(u_{1}\cdot u_{2})_x',
\end{array}
\end{equation}
where $\partial^{-1}_{x}$ and $\partial^{-1}_{y}$ are skew--symmetric operators defined by equalities
(\ref{1.8}). Then, the system (\ref{3.1})--(\ref{3.4}) is a Hamiltonian system of the form:
\begin{equation}\label{3.5}
\frac{\partial u_1}{\partial t}=\frac{-\delta H}{\delta u_2},\qquad \frac{\partial u_2}{\partial t}=\frac{\delta H}{\delta u_1},
\end{equation}
where Hamilton function $H=H_1+H_2+H_3,$
and
\begin{equation}\label{3.6}
\begin{array}{l}
\displaystyle H_1= \iint\limits_{\R^2}\,[u_{1,xx}\cdot u_{2,x}+u_{1,yy}\cdot u_{2,y}]\,dx\,dy,\\[3mm]
\displaystyle H_2=  3 \iint\limits_{\R^2}\,[u_{1,y}\cdot u_{2}\partial^{-1}_{x}(u_{1}\cdot u_{2,y})]\,dx\,dy,\\[3mm]
\displaystyle  H_3= 3 \iint\limits_{\R^2}\,[u_{1,x}\cdot u_{2}\partial^{-1}_{y}(u_{1}\cdot u_{2,x})]\,dx\,dy.
\end{array}
\end{equation}
\end{theorem}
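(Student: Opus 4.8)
The plan is to follow the pattern of the proof of Theorem~1.1: compute the variational derivatives $\delta H/\delta u_1$ and $\delta H/\delta u_2$ one summand of $H$ at a time, using the skew-symmetry of the operators $\partial_x^{-1}$ and $\partial_y^{-1}$ (the analogue of identity (\ref{1.9}) and of its counterpart for $\partial_y\partial_x^{-1}$) together with integration by parts, and then recognise the result as the right-hand sides of (\ref{3.1}) once the explicit expressions (\ref{3.4}) for the pseudopotentials are substituted. Throughout, the boundary terms produced by the integrations by parts are discarded; this is legitimate for $u_1,u_2$ in a class of rapidly decreasing functions with derivatives (as in the space $W_2'(\R^2)$ used around (\ref{1.9})), for which $\partial_x^{-1}$ and $\partial_y^{-1}$ are well defined by (\ref{1.8}), are skew-symmetric, and commute, respectively, with $\partial_y$ and $\partial_x$.

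First I would dispose of $H_1$. Since $H_1=\iint_{\R^2}(u_{1,xx}u_{2,x}+u_{1,yy}u_{2,y})\,dx\,dy$, two integrations by parts give $\delta H_1/\delta u_2=-(\partial_x^3+\partial_y^3)u_1$ and $\delta H_1/\delta u_1=(\partial_x^3+\partial_y^3)u_2$, which already supplies the third-order linear parts of both equations of (\ref{3.1}) with the signs prescribed by (\ref{3.5}).

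Next, the nonlinear terms must come from $H_2$ and $H_3$. Take $H_2=3\iint_{\R^2}u_{1,y}u_2\,\partial_x^{-1}(u_1u_{2,y})\,dx\,dy$. Varying $u_2$, the bare factor $u_2$ contributes $3u_{1,y}\partial_x^{-1}(u_1u_{2,y})$, while the occurrence of $u_{2,y}$ inside $\partial_x^{-1}$, after transferring $\partial_x^{-1}$ onto the other factor by skew-symmetry and integrating by parts in $y$, contributes $3\partial_y[u_1\partial_x^{-1}(u_{1,y}u_2)]=3u_{1,y}\partial_x^{-1}(u_{1,y}u_2)+u_1v_1$, where I have used $\partial_y\partial_x^{-1}=\partial_x^{-1}\partial_y$ and the definition of $v_1$ in (\ref{3.4}). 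Adding the two pieces and using the product rule $u_1u_{2,y}+u_{1,y}u_2=(u_1u_2)_y'$, I obtain $\delta H_2/\delta u_2=v\,u_{1,y}+v_1u_1$; the same computation with $x$ and $y$ interchanged gives $\delta H_3/\delta u_2=w\,u_{1,x}+w_1u_1$. Hence $-\delta H/\delta u_2=(\partial_x^3+\partial_y^3)u_1-v\,u_{1,y}-w\,u_{1,x}-(v_1+w_1)u_1$, which is the first equation of (\ref{3.1}). Varying instead $u_1$ in $H_2$ and $H_3$ (integrating by parts the factor $u_{1,y}$, respectively $u_{1,x}$, and treating the occurrence of $u_1$ inside the integral operator as above) leads in the same way to $\delta H_2/\delta u_1=-v\,u_{2,y}-v_2u_2$ and $\delta H_3/\delta u_1=-w\,u_{2,x}-w_2u_2$, so that $\delta H/\delta u_1=(\partial_x^3+\partial_y^3)u_2-v\,u_{2,y}-w\,u_{2,x}-(v_2+w_2)u_2$ reproduces the second equation of (\ref{3.1}).

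The differentiation itself is routine; the point that needs care is the bookkeeping. Each of $u_1,u_2$ enters $H_2$ and $H_3$ both as an explicit factor and in differentiated form inside $\partial_x^{-1}$ or $\partial_y^{-1}$, and the separate ``half terms'' only coalesce into the pseudopotentials $v,w,v_1,w_1,v_2,w_2$ after one applies the product rule and commutes $\partial_x^{-1}$ with $\partial_y$ (and $\partial_y^{-1}$ with $\partial_x$); getting every sign and coefficient right, and checking that no boundary term survives, is the main obstacle. I expect nothing essential beyond that: the symmetry of (\ref{3.1}) under $u_1\leftrightarrow u_2$, $x\leftrightarrow y$ halves the work, and the self-adjointness structure of $\partial_x^{-1},\partial_y^{-1}$ makes the two equations in (\ref{3.5}) come out consistently from the single function $H=H_1+H_2+H_3$.
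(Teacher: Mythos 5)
Your proof is correct and follows essentially the same route as the paper: compute $\delta H_1/\delta u_j$ by integration by parts, and for $H_2$, $H_3$ track the two occurrences of each unknown (bare factor and the differentiated copy inside $\partial_x^{-1}$ or $\partial_y^{-1}$), transfer the skew-symmetric operator, apply the product rule, and identify the pseudopotentials via (\ref{3.4}); the paper merely packages the same bookkeeping through an auxiliary functional $\hat H_2$ with a dummy argument $u_3$. Note that your result $\delta H_3/\delta u_2=w\,u_{1,x}+w_1u_1$ differs from the paper's written $w\,u_{1,x}+w_2u_1$, but yours is the one consistent with (\ref{3.2}), (\ref{3.4}) and the first equation of (\ref{3.1}) as stated, so the discrepancy is a typo in the paper rather than an error on your part.
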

\begin{proof}
Functional $H_1$ can be represented in the following two forms:
  $$
  H_1=\iint\limits_{\R^2}\,[u_1(u_{2,xxx}+u_{2,yyy})]\,dx\,dy=-\iint\limits_{\R^2}\,[(u_{1,xxx}+u_{1,yyy})u_2]\,dx\,dy.
$$
It follows from the first representation: $\displaystyle \frac{\delta H_1}{\delta u_1}=\frac{\partial^3u_2}{\partial x^3}+\frac{\partial^3 u_2}{\partial y^3},$
and from the second one:
$$\frac{\delta H_1}{\delta u_2}=-\Bigl[\frac{\partial^3u_1}{\partial x^3}+\frac{\partial^3 u_1}{\partial y^3}\Bigr].
$$
Finding $\displaystyle \frac{\delta H_2}{\delta u_2},$ it is desirable to represent functional $H_2$ in the form:
$$\hat{H_2}=3\iint\limits_{\R^2}\,[(u_{1,y}\cdot u_{2})\partial^{-1}_{x}(u_{1}\cdot u_{3,y})]\,dx\,dy\Big|_{u_3=u_2}=3\iint\limits_{\R^2}\,u_3\partial_y[u_1\partial^{-1}_{x}(u_{1,y}\cdot u_{2})]\,dx\,dy\Big|_{u_3=u_2}.
$$
Then,
$$
\begin{array}{l}
\displaystyle \frac{\delta H_2}{\delta u_2}=\displaystyle \Bigl[ \frac{\delta \hat{H}_2}{\delta u_2}+ \frac{\delta \hat{H}_2}{\delta u_3}\Bigr]\Big|_{u_3=u_2}=3u_{1,y}\partial^{-1}_{x}(u_{1}\cdot u_{2,y})+3 \partial_y[u_1\partial^{-1}_{x}(u_{1,y}\cdot u_{2})]=\\[5mm]
=3u_{1,y}\partial^{-1}_{x}(u_{1}\cdot u_2)_y'+3u_1\partial_y \partial^{-1}_{x}(u_{1,y}\cdot u_{2}),
\end{array}
$$
and by virtue of (\ref{3.4}):
$$\displaystyle \frac{\delta H_2}{\delta u_2}=v\cdot u_{1,y}+v_1\cdot u_{1}.$$
Similarly we get:
$$\frac{\delta H_3}{\delta u_2}=w\cdot u_{1,x}+w_2\cdot u_1.$$
Therefore,
$$\frac{\delta H}{\delta u_2}=-\Bigl(\frac{\partial^3u_1}{\partial x^3}+\frac{\partial^3 u_1}{\partial y^3}\Bigr)+w\cdot u_{1,x}+v\cdot u_{1,y}+(v_1+w_2)\cdot u_{1}.
$$
The first equation of (\ref{3.1}) reduces to the equation:
$\displaystyle \frac{\partial u_1}{\partial t}=-\frac{\delta H}{\delta u_2}.
$
It is similarly verified that the second equation of (\ref{3.1}) is equivalent to the equation
$\displaystyle \frac{\partial u_2}{\partial t}=\frac{\delta H}{\delta u_1}.
$
\end{proof}
{\bf Acknowledgments.}\\
 The author is  grateful  to the Simons Foundation  for the financial support of the Institute of Mathematics of the National Academy of Sciences of Ukraine
  (1030291 and 1290607, L.N.), which facilitated the preparation of this English version  of the results, partially published only in Russian before 1992.


\end{document}